\newtheorem{thm}{Theorem}[section]
\theoremstyle{definition}
\theoremstyle{remark}
\newtheorem*{ex}{Example}
\numberwithin{equation}{section}
\newcommand{\BibTeX}{B\kern-0.1emi\kern-0.017emb\kern-0.15em\TeX}
\newcommand{\XYpic}{$\mathrm{X\kern-0.3em\raisebox{-0.18em}{Y}}$-$\mathrm{pic}\,$}
\newcommand{\cl}{C \kern -0.1em \ell}  %%Clifford algebra
\newcommand{\BR}{\mathbb{R}}
\newcommand{\BC}{\mathbb{C}}
\newcommand{\BH}{\mathbb{H}}
\newcommand{\ed}{\end{document}}
\newcommand{\Mat}{{\rm Mat}}
\newcommand{\spn}{{\rm span}}
\newcommand{\diag}{{\rm diag}}
\def\cl{\mathcal {G}}
\newcommand{\U}{{\rm U}}
\newcommand{\G}{{\rm G}}
\newcommand{\T}{{\rm T}}
\newcommand{\HH}{{\rm H}}
\newcommand{\OO}{{\rm O}}
\newcommand{\Sp}{{\rm Sp}}
\begin{document}

%-------------------------------------------------------------------------
% editorial commands: to be inserted by the editorial office
%
%\firstpage{1} \volume{228} \Copyrightyear{2004} \DOI{003-0001}
%
%
%\seriesextra{Just an add-on}
%\seriesextraline{This is the Concrete Title of this Book\br H.E. R and S.T.C. W, Eds.}
%
% for journals:
%
%\firstpage{1}
%\issuenumber{1}
%\Volumeandyear{1 (2004)}
%\Copyrightyear{2004}
%\DOI{003-xxxx-y}
%\Signet
%\commby{inhouse}
%\submitted{March 14, 2003}
%\received{March 16, 2000}
%\revised{June 1, 2000}
%\accepted{July 22, 2000}
%
%
%
%---------------------------------------------------------------------------
%Insert here the title, affiliations and abstract:
%

\title[On SVD and Polar Decomposition in Clifford Algebras]
 {On SVD and Polar Decomposition in Real and Complexified Clifford Algebras}

\author[D. Shirokov]{Dmitry Shirokov}
\address{%
HSE University\\
Moscow 101000\\
Russia
\medskip}
\address{
and
\medskip}
\address{
Institute for Information Transmission Problems of the Russian Academy of Sciences \\
Moscow 127051 \\
Russia}
\email{dm.shirokov@gmail.com}

\subjclass{15A66, 11E88}
\keywords{Clifford algebra, geometric algebra, orthogonal group, polar decomposition, singular value decomposition, SVD, symplectic group, unitary group}
%
%\date{\today}
%----------additions
%\dedicatory{Last Revised:\\ \today}
%%% ----------------------------------------------------------------------
\begin{abstract}
In this paper, we present a natural implementation of singular value decomposition (SVD) and polar decomposition of an arbitrary multivector in nondegenerate real and complexified Clifford geometric algebras of arbitrary dimension and signature. The new theorems involve only operations in geometric algebras and do not involve matrix operations. We naturally define these and other related structures such as Hermitian conjugation, Euclidean space, and Lie groups in geometric algebras. The results can be used in various applications of geometric algebras in computer science, engineering, and physics.
\end{abstract}
\label{page:firstblob}
%%% ----------------------------------------------------------------------
\maketitle
%%% ----------------------------------------------------------------------
% \tableofcontents

\section{Introduction}\label{sectIntro}

 The method of singular value decomposition for matrices was discovered independently by E. Beltrami in 1873 \cite{Beltrami} and C. Jordan in 1874 \cite{Jordan1,Jordan2}. This method is classical and is widely used in various applications -- signal and image processing, least squares fitting of data, process control, computer science, engineering, big data, machine learning, physics, etc. A lot of literature is devoted to SVD, its algorithms, and its applications (see, for example, the books by  G. Golub \& C. Van Loan \cite{Van} and G. Forsythe,  M. Malcolm \&  C. Moler \cite{For}, the paper \cite{Golubb}). Another generalization of SVD, quaternion SVD, was invented by F. Zhang in his paper \cite{quat}; for quaternion SVD algorithms, see \cite{quat2,Sangwine}. Applications of quaternion SVD in image processing are considered in \cite{image}.  Another generalization of SVD, hyperbolic SVD, and its applications are discussed in \cite{Boj, Zha, noteHSVD, YM, HSVD}.
 Polar decomposition of complexified quaternions and octonions is discussed in \cite{polar1}. See the survey \cite{Signal} on different applications of SVD and its generalizations (in particular, quaternion SVD and octonion SVD) in signal and image processing. Note also the paper \cite{Abl} on computing SVD of multivectors using the corresponding matrix representations (an explicit implementation of SVD in Clifford algebras is not presented, in contrast to our work).

In the present paper, we present a natural implementation of singular value decomposition (SVD) and polar decomposition of an arbitrary multivector in nondegenerate real and complexified Clifford geometric algebras of arbitrary dimension and signature. We naturally define these and other related structures such as Hermitian conjugation, Euclidean space, and Lie groups in geometric algebras. ``Natural'' means that our definitions and statements involve only operations in geometric algebra and do not involve the corresponding matrix representations. The meaning of SVD in Clifford geometric algebras is the following: after multiplication on the left and on the right by elements of a fixed Lie group, any multivector can be placed in a real subspace of lower dimension. Note that we present existing theorems and do not discuss algorithms in this paper. Theorems \ref{th1}, \ref{th2}, \ref{th1C}, and \ref{th2C} are new.

The paper is organized as follows. In Section \ref{sectGA}, we recall basic facts on nondegenerate real Clifford geometric algebras $\cl_{p,q}$. In Section \ref{sectES}, we introduce Hermitian conjugation and Euclidean space in $\cl_{p,q}$. In Section \ref{sectMR}, we present explicit matrix representation of $\cl_{p,q}$ and consider several Lie groups, which are isomorphic to orthogonal, unitary, and symplectic classical matrix Lie groups. In Section \ref{sectSVDM}, we recall classical theorems on SVD of real, complex, and quaternion matrices. In Section \ref{sectSVD}, we present SVD of multivectors in $\cl_{p,q}$ and several examples. In Section \ref{sectCPD}, we recall classical theorems on polar decomposition of real, complex, and quaternion matrices. In Section \ref{sectPD}, we present polar decomposition of multivectors in $\cl_{p,q}$. In Section \ref{sectCompl}, we consider the case of complexified Clifford geometric algebras $\BC\otimes\cl_{p,q}$ and present SVD and polar decomposition of multivectors in $\BC\otimes\cl_{p,q}$. Conclusions follow in Section \ref{sectConcl}.

This paper is an extended version of the short note (11 pages) in Conference Proceedings \cite{SVDENGAGE} (Empowering Novel Geometric Algebra for Graphics \& Engineering Workshop within the International Conference Computer Graphics International 2023). Sections \ref{sectIntro}, \ref{sectMR}, \ref{sectSVD}, and \ref{sectConcl} are extended; Section \ref{sectCompl} and Theorems \ref{th1C} and \ref{th2C} are new.

\section{Real Clifford geometric algebra}\label{sectGA}

Let us consider the real Clifford geometric algebra (GA) $\cl_{p,q}$ \cite{Hestenes,Lounesto,Doran,Bulg} with the identity element $e\equiv 1$ and the generators $e_a$, $a=1, 2, \ldots, n$, where $n=p+q\geq 1$. The generators satisfy the conditions
$$
e_a e_b+e_b e_a=2\eta_{ab}e,\qquad \eta=(\eta_{ab})=\diag(\underbrace{1, \ldots , 1}_p, \underbrace{-1, \ldots, -1}_{q}).
$$
Consider the subspaces $\cl^k_{p,q}$ of grades $k=0, 1, \ldots, n$, which elements are linear combinations of the basis elements $e_A=e_{a_1 a_2 \ldots a_k}=e_{a_1}e_{a_2}\cdots e_{a_k}$, $1 \leq a_1<a_2<\cdots< a_k \leq n$, with ordered multi-indices of length $k$. An arbitrary element (multivector) $M\in\cl_{p,q}$ has the form
$$
M=\sum_A m_A e_A\in\cl_{p,q},\qquad m_A\in\BR,
$$
where we have a sum over arbitrary multi-index $A$ of length from $0$ to $n$.
The projection of $M$ onto the subspace $\cl^k_{p,q}$ is denoted by $\langle M \rangle_k$.

The grade involution and reversion of a multivector $M\in\cl_{p,q}$ are denoted by 
\begin{eqnarray}
\widehat{M}=\sum_{k=0}^n(-1)^{k}\langle M \rangle_k,\qquad 
\widetilde{M}=\sum_{k=0}^n (-1)^{\frac{k(k-1)}{2}} \langle M \rangle_k\label{gir}
\end{eqnarray}
and have the properties
\begin{eqnarray}
\widehat{M_1 M_2}=\widehat{M_1} \widehat{M_2},\qquad \widetilde{M_1 M_2}=\widetilde{M_2} \widetilde{M_1},\qquad \forall M_1, M_2\in\cl_{p,q}.\label{invol}
\end{eqnarray}

\section{Euclidean space on GA}\label{sectES}

Let us consider an operation of Hermitian conjugation $\dagger$ in $\cl_{p,q}$  (see \cite{unitary,Bulg}):
\begin{eqnarray}
M^\dagger:=M|_{e_A \to (e_A)^{-1}}=\sum_A m_A (e_A)^{-1},\qquad M\in\cl_{p,q}.\label{dagger}
\end{eqnarray}
We have the following two other equivalent definitions of this operation:
\begin{eqnarray}
&&M^\dagger=\begin{cases}
e_{1\ldots p} \widetilde{M}e_{1\ldots p}^{-1}, & \mbox{if $p$ is odd,}\\
e_{1\ldots p}\widetilde{\widehat{M}}e_{1\ldots p}^{-1}, & \mbox{if $p$ is even,}\\
\end{cases}\\
&&M^\dagger=
\begin{cases}
e_{p+1\ldots n} \widetilde{M}e_{p+1\ldots n}^{-1}, & \mbox{if $q$ is even,}\\
e_{p+1\ldots n} \widetilde{\widehat{M}}e_{p+1\ldots n}^{-1}, & \mbox{if $q$ is odd.}\\
\end{cases}
\end{eqnarray}
The operation\footnote{Compare with the well-known operation $M_1 * M_2:=\langle \widetilde{M_1} M_2 \rangle_0$ in the real geometric algebra $\cl_{p,q}$, which is positive definite only in the case of signature $(p,q)=(n,0)$.} 
$$(M_1, M_2):=\langle M_1^\dagger M_2 \rangle_0,\qquad M_1, M_2\in\cl_{p,q}$$
is a (positive definite) scalar product with the properties
\begin{eqnarray}
&&\!\!\!\!\!\!\!\!\!\!\!\!\!\!\!(M_1, M_2)=(M_2, M_1),\\
&&\!\!\!\!\!\!\!\!\!\!\!\!\!\!\!(M_1+M_2, M_3)=(M_1, M_3)+(M_2, M_3),\quad (M_1, \lambda M_2)=\lambda (M_1, M_2),\\
&&\!\!\!\!\!\!\!\!\!\!\!\!\!\!\!(M, M)\geq 0,\quad \forall M\in\cl_{p,q};\qquad (M, M)=0 \Leftrightarrow M=0\label{||M||}
\end{eqnarray}
for arbitrary multivectors $M_1, M_2, M_3\in\cl_{p,q}$ and $\lambda\in\BR$.

Using this scalar product we introduce inner product space over the field of real numbers (Euclidean space) in $\cl_{p,q}$.

We have a norm 
\begin{eqnarray}
||M||:=\sqrt{(M,M)}=\sqrt{\langle M^\dagger M \rangle_0},\qquad M\in\cl_{p,q}\label{norm}
\end{eqnarray}
with the properties
\begin{eqnarray}
&&||M||\geq 0,\quad \forall M\in\cl_{p,q};\qquad ||M||=0 \Leftrightarrow M=0,\\
&&||M_1+M_2|| \leq ||M_1||+||M_2||,\qquad \forall M_1, M_2\in\cl_{p,q},\\
&&||\lambda M||=|\lambda| ||M||,\qquad \forall M\in\cl_{p,q},\qquad \forall \lambda\in\BR.
\end{eqnarray}

\section{Matrix representation of $\cl_{p,q}$ and Lie groups}\label{sectMR}

Let us consider the following faithful representation (isomorphism) of the real geometric algebra $\cl_{p,q}$
\begin{eqnarray}
\beta:\cl_{p,q}\,\,\to\,\, \begin{cases}
\Mat(2^{\frac{n}{2}},\BR), & \mbox{if $p-q=0, 2 \mod 8$,}\\
\Mat(2^{\frac{n-1}{2}},\BR)\oplus\Mat(2^{\frac{n-1}{2}},\BR), & \mbox{if $p-q=1 \mod 8$,}\\
\Mat(2^{\frac{n-1}{2}},\BC), & \mbox{if $p-q=3, 7 \mod 8$,}\\
\Mat(2^{\frac{n-2}{2}},\BH), & \mbox{if $p-q=4, 6 \mod 8$,}\\
\Mat(2^{\frac{n-3}{2}},\BH)\oplus\Mat(2^{\frac{n-3}{2}},\BH), & \mbox{if $p-q=5 \mod 8$.}
\end{cases}
\label{isom}
\end{eqnarray}
These isomorphisms are known as Cartan--Bott 8-periodicity.

Let us denote the size of the corresponding matrices by
\begin{eqnarray}
d:=\begin{cases}
2^{\frac{n}{2}}, & \mbox{if $p-q=0, 2 \mod 8$,}\\
2^{\frac{n+1}{2}}, & \mbox{if $p-q=1 \mod 8$,}\\
2^{\frac{n-1}{2}}, & \mbox{if $p-q=3, 5, 7 \mod 8$,}\\
2^{\frac{n-2}{2}}, & \mbox{if $p-q=4, 6 \mod 8$.}
\end{cases}\label{dd}
\end{eqnarray}
Note that we use block-diagonal matrices in the cases $p-q=1, 5\mod 8$. 

Let us present an explicit form of one of these representations of $\cl_{p,q}$ (see also \cite{Lie1,Lie2,LMA,det}). We denote this fixed representation by $\beta'$. For the identity element, we always use the identity matrix $\beta'(e)=I$ of the corresponding size $d$.  We always take $\beta'(e_{a_1 a_2 \ldots a_k}) = \beta' (e_{a_1}) \beta' (e_{a_2}) \cdots \beta'(e_{a_k})$. 

In some particular cases, we construct $\beta'$ in the following way:
\begin{itemize}
\item In the case $\cl_{0,1}$: $e_1 \to i$.
\item In the case $\cl_{1, 0}$: $e_1 \to \diag(1, -1)$.
\item In the case $\cl_{0, 2}$: $e_1 \to i$, $e_2 \to j$.
\item In the case $\cl_{0, 3}$: $e_1 \to \diag(i, -i)$, $e_2 \to \diag(j, -j)$, $e_3 \to \diag(k, -k)$.
\end{itemize}
Suppose we know $\beta'_a:=\beta'(e_a)$, $a = 1, \ldots, n$ for some fixed $\cl_{p,q}$, $p+q=n$. Then we construct explicit matrix representation of $\cl_{p+1, q+1}$, $\cl_{q+1, p-1}$, $\cl_{p-4, q-4}$ in the following way using the matrices $\beta'_a$, $a=1, \ldots, n$.
\begin{itemize}
\item In the case $\cl_{p+1, q+1}$: $e_a \to \diag(\beta'_a, -\beta'_a)$, $a=1, \ldots, p, p+2, \ldots, p+q+1$. In the subcase $p-q\neq 1 \mod 4$, we have 
$$
e_{p+1}\to \begin{pmatrix} 0 & I \\ I & 0 \end{pmatrix},\qquad e_{p+q+2} \to \begin{pmatrix} 0 & -I \\ I & 0  \end{pmatrix}.$$
 In the subcase $p-q=1\mod 4$, we have
$$
e_{p+1}\to \diag (\beta_1 \cdots \beta_n \Omega, -\beta_1 \cdots \beta_n \Omega), \qquad e_{p+q+2} \to \diag(\Omega, -\Omega),$$
where
\begin{eqnarray}
\Omega=\begin{pmatrix} 0 & -I \\ I & 0 \end{pmatrix}.\label{omega}
\end{eqnarray}
\item In the case $\cl_{q+1, p-1}$: $e_1 \to \beta'_1$, $e_i \to \beta'_i \beta'_1$, $i=2, \ldots, n$.
\item In the case $\cl_{p-4,q+4}$: $e_i \to \beta'_i \beta'_1 \beta'_2 \beta'_3 \beta'_4$, $i=1, 2, 3, 4$, $e_j \to \beta'_j$, $j=5, \ldots, n$.
\end{itemize}
Using these recurrences and the Cartan--Bott 8-periodicity, we obtain explicit matrix representation $\beta'$ of all $\cl_{p,q}$.

It can be directly verified that for this matrix representation we have
\begin{eqnarray}
\eta_{aa} \beta'(e_a)=\begin{cases}
(\beta'(e_a))^\T, & \mbox{if $p-q=0, 1, 2\mod 8$,}\\
(\beta'(e_a))^\HH, & \mbox{if $p-q=3, 7\mod 8$,}\\
(\beta'(e_a))^*, & \mbox{if $p-q=4, 5, 6\mod 8$,}\\
\end{cases}\qquad a=1, \ldots, n,
\end{eqnarray}
where $\T$ is transpose of a (real) matrix, $\HH$ is the Hermitian transpose of a (complex) matrix, $*$ is the conjugate transpose of a matrix over quaternions. Using the linearity, we get that these matrix conjugations are consistent with Hermitian conjugation of the corresponding multivector:
\begin{eqnarray}
\beta'(M^\dagger)=\begin{cases}
(\beta'(M))^\T, & \mbox{if $p-q=0, 1, 2\mod 8$,}\\
(\beta'(M))^\HH, & \mbox{if $p-q=3, 7\mod 8$,}\\
(\beta'(M))^*, & \mbox{if $p-q=4, 5, 6\mod 8$,}\\
\end{cases} \qquad M\in\cl_{p,q}.\label{sogl}
\end{eqnarray}

Note that the formulas like (\ref{sogl}) are not valid for an arbitrary matrix representation $\beta$ of the form (\ref{isom}). They are true for the matrix representations $\gamma=T^{-1}\beta' T$ obtained from $\beta'$ by the matrix $T$ such that
\begin{itemize}
\item $T^\T T= I$ in the cases $p-q=0, 1, 2\mod 8$,
\item $T^\HH T=I$ in the cases $p-q=3, 7 \mod 8$,
\item $T^* T=I$ in the cases $p-q=4, 5, 6\mod 8$.
\end{itemize}

Let us consider the following Lie group in $\cl_{p,q}$
\begin{eqnarray}
\G\cl_{p,q}=\{M\in \cl_{p,q}: M^\dagger M=e\},\label{GG}
\end{eqnarray}
where $\dagger$ is (\ref{dagger}). Note that all the basis elements $e_A$ of $\cl_{p,q}$ belong to this group by the definition.

Using (\ref{isom}) and (\ref{sogl}), we get the following isomorphisms of this group to the classical matrix Lie groups:
\begin{eqnarray}
\G\cl_{p,q}\simeq\begin{cases}
    \OO(2^{\frac{n}{2}}), &\mbox{if $p-q=0, 2 \mod 8$,}\\
    \OO(2^{\frac{n-1}{2}})\times\OO(2^{\frac{n-1}{2}}), &\mbox{if $p-q=1\mod 8$,}\\
    \U(2^{\frac{n-1}{2}}), &\mbox{if $p-q=3, 7 \mod 8$,}\\
    \Sp(2^{\frac{n-2}{2}}), &\mbox{if $p-q=4, 6 \mod 8$,}\\
    \Sp(2^{\frac{n-3}{2}})\times\Sp(2^{\frac{n-3}{2}}), &\mbox{if $p-q=5\mod 8$,}
\end{cases}\label{isgr}
\end{eqnarray}
where we have the following notation for (orthogonal, unitary, and simplectic correspondingly) classical matrix Lie groups
\begin{eqnarray}
\OO(k)=\{A\in\Mat(k, \BR):\quad A^\T A=I\},\\
\U(k)=\{A\in\Mat(k, \BC):\quad A^\HH A=I\},\\
\Sp(k)=\{A\in\Mat(k, \BH):\quad A^* A=I\}.
\end{eqnarray}
The group $\Sp(k)$ sometimes is called quaternionic unitary group or hyperunitary group. Note that this group also has the following realization in terms of complex matrices:
$$
\Sp(k)\simeq\{A\in\Mat (2k, \BC):\quad A^\T\Omega A= \Omega,\quad A^\HH A=I\},
$$
where $\Omega$ is (\ref{omega}).

The Lie algebra of the Lie group $\G\cl_{p,q}$ is 
$$\mathfrak{g}\cl_{p,q}=\{M\in\cl_{p,q}: M^\dagger=-M\}.$$
The basis of the Lie algebra $\mathfrak{g}\cl_{p,q}$ consists of anti-Hermitian basis elements
$$e_{p+1},\, \ldots,\, e_{n},\, e_{12},\, \ldots,\, e_{p-1 p},\, \ldots $$
The number of such elements can be calculated (see Theorem 7 in \cite{Bulg}); we get the dimension of the Lie group $\G\cl_{p,q}$ and the Lie algebra $\mathfrak{g}\cl_{p,q}$ (also we can calculate the dimension of the matrix Lie groups (\ref{isgr})):
\begin{eqnarray}
\dim(\G\cl_{p,q})=\dim(\mathfrak{g}\cl_{p,q})=2^{n-1}-2^{\frac{n-1}{2}}\sin\frac{\pi(p-q+1)}{4}\\
=\begin{cases}
2^{n-1}-2^{\frac{n-2}{2}}, & \mbox{if $p-q=0, 2 \mod 8$,}\\
2^{n-1}-2^{\frac{n-1}{2}}, & \mbox{if $p-q=1 \mod 8$,}\\
2^{n-1}, & \mbox{if $p-q=3, 7 \mod 8$,}\\
2^{n-1}+2^{\frac{n-2}{2}}, & \mbox{if $p-q=4, 6 \mod 8$,}\\
2^{n-1}+2^{\frac{n-1}{2}}, & \mbox{if $p-q=5 \mod 8$.}
\end{cases}\label{GGG}
\end{eqnarray}

\section{On the classical SVD of real, complex, and quaternion matrices}\label{sectSVDM}

We have the following well-known theorems on singular value decomposition of an arbitrary real, complex, and quaternion matrices (see, for example, \cite{For,Van,quat}). 

\begin{thm} For an arbitrary $A\in\BR^{n\times m}$, there exist matrices $U\in \OO(n)$ and $V\in\OO(m)$ such that
\begin{eqnarray}
 A=U\Sigma V^\T,\label{SVDR}
\end{eqnarray}
where
$$
\Sigma=\diag(\lambda_1, \lambda_2, \ldots, \lambda_k),\qquad k=\min(n, m),\qquad \BR\ni\lambda_1, \lambda_2, \ldots, \lambda_k\geq 0.
$$
Note that choosing matrices $U\in \OO(n)$ and $V\in\OO(m)$, we can always arrange diagonal elements of the matrix $\Sigma$
in decreasing order $\lambda_1\geq \lambda_2 \geq \cdots \geq \lambda_k\geq 0$.
\end{thm}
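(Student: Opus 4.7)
The plan is to reduce SVD of $A$ to the spectral theorem applied to the symmetric positive semidefinite matrix $A^\T A\in\BR^{m\times m}$. This is the standard route, and each step is routine once set up, so the write-up will be compact.

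First, I would observe that $A^\T A$ is symmetric and positive semidefinite, since $x^\T A^\T A x=\|Ax\|^2\geq 0$ for all $x\in\BR^m$. By the real spectral theorem, there exists $V\in\OO(m)$ whose columns $v_1,\ldots,v_m$ are an orthonormal eigenbasis of $A^\T A$, with eigenvalues $\mu_1\geq\mu_2\geq\cdots\geq\mu_m\geq 0$ arranged in decreasing order. Set $\lambda_i:=\sqrt{\mu_i}\geq 0$, and let $r=\rank(A)$, so that $\lambda_1\geq\cdots\geq\lambda_r>0=\lambda_{r+1}=\cdots=\lambda_m$.

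Next, I would construct $U\in\OO(n)$. For $i=1,\ldots,r$ define $u_i:=\lambda_i^{-1}Av_i\in\BR^n$. A direct computation gives
$$u_i^\T u_j=\lambda_i^{-1}\lambda_j^{-1}v_i^\T A^\T A v_j=\lambda_i^{-1}\lambda_j^{-1}\mu_j\,v_i^\T v_j=\delta_{ij},$$
so $u_1,\ldots,u_r$ are orthonormal in $\BR^n$. Extend this to an orthonormal basis $u_1,\ldots,u_n$ of $\BR^n$ by the Gram--Schmidt procedure on any completion, and let $U$ be the matrix with these columns; then $U\in\OO(n)$.

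Finally, I would verify $A=U\Sigma V^\T$, equivalently $AV=U\Sigma$, by checking the columns. For $i\leq r$, $Av_i=\lambda_i u_i$ by construction, which matches column $i$ of $U\Sigma$. For $i>r$ (only possible when $i\leq m$), $\mu_i=0$ forces $\|Av_i\|^2=v_i^\T A^\T A v_i=0$, hence $Av_i=0$, and column $i$ of $U\Sigma$ is also zero (either because $\lambda_i=0$ for $i\leq k$, or because the $i$-th column of $\Sigma$ is absent for $i>k=\min(n,m)$; in the latter subcase the extra $v_i$ simply lie in $\ker A$). Thus $AV=U\Sigma$, giving the decomposition in the stated form with the $\lambda_i$ already in decreasing order.

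The only subtle point is the rectangular case: when $n<m$ one has extra $v_i$ with $i>n$ lying in $\ker A$ that need to be accommodated, and when $n>m$ the basis extension on the $U$-side produces the missing columns; the definition $k=\min(n,m)$ and the shape of $\Sigma$ absorb both situations. This bookkeeping is the only mild obstacle; everything else follows from the spectral theorem and orthonormality.
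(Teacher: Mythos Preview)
Your argument is correct and is the standard proof of real SVD via the spectral theorem for $A^\T A$. Note, however, that the paper does not supply its own proof of this statement: it is recorded as a classical result with references to \cite{For,Van,quat}, so there is no in-paper proof to compare against; your write-up is essentially what one would find in those references.
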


Diagonal elements of the matrix $\Sigma$ are called singular values, they are square roots of eigenvalues of the matrices $A A^\T$ or $A^\T A$. Columns of the matrices $U$ and $V$ are eigenvectors of the matrices $A A^\T$ and $A^\T A$ respectively.

\begin{thm} For an arbitrary $A\in\BC^{n\times m}$, there exist matrices $U\in \U(n)$ and $V\in\U(m)$ such that
\begin{eqnarray}
 A=U\Sigma V^\HH,\label{SVDC}
\end{eqnarray}
where
$$
\Sigma=\diag(\lambda_1, \lambda_2, \ldots, \lambda_k),\qquad k=\min(n, m),\qquad \BR\ni\lambda_1, \lambda_2, \ldots, \lambda_k\geq 0.
$$
Note that choosing matrices $U\in \U(n)$ and $V\in\U(m)$, we can always arrange diagonal elements of the matrix $\Sigma$
in decreasing order $\lambda_1\geq \lambda_2 \geq \cdots \geq \lambda_k\geq 0$.
\end{thm}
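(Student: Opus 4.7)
The plan is to follow the standard spectral-theoretic approach, reducing SVD to the Hermitian eigenvalue problem. First I would form the matrix $A^\HH A\in\BC^{m\times m}$, which is Hermitian and positive semidefinite since $(A^\HH A)^\HH=A^\HH A$ and $x^\HH A^\HH A x=\|A x\|^2\geq 0$ for every $x\in\BC^m$. By the complex spectral theorem, there exists $V\in\U(m)$ whose columns $v_1,\ldots,v_m$ form an orthonormal eigenbasis of $A^\HH A$ with real non-negative eigenvalues $\mu_1\geq\mu_2\geq\cdots\geq\mu_m\geq 0$. Setting $\lambda_i:=\sqrt{\mu_i}$ produces the candidate singular values, already arranged in decreasing order by a suitable permutation of the columns of $V$.

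Next I would construct $U$. Let $r$ be the number of strictly positive $\lambda_i$; note $r\leq k=\min(n,m)$ because $\rank(A)=\rank(A^\HH A)=r$. For $i=1,\ldots,r$, define $u_i:=\lambda_i^{-1}A v_i\in\BC^n$. A direct computation gives
$$
u_i^\HH u_j=\lambda_i^{-1}\lambda_j^{-1}v_i^\HH A^\HH A v_j=\lambda_i^{-1}\lambda_j^{-1}\mu_j v_i^\HH v_j=\delta_{ij},
$$
so $\{u_1,\ldots,u_r\}$ is orthonormal in $\BC^n$. Extend this set to an orthonormal basis $u_1,\ldots,u_n$ of $\BC^n$ by Gram--Schmidt, and assemble these vectors as columns of a matrix $U\in\U(n)$.

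Finally I would verify $A=U\Sigma V^\HH$, which is equivalent to $A V=U\Sigma$, checked column by column. For $i\leq r$ the $i$-th column of $A V$ equals $A v_i=\lambda_i u_i$ by construction, matching the $i$-th column of $U\Sigma$. For $r<i\leq k$ one has $\lambda_i=0$ and $\|A v_i\|^2=v_i^\HH A^\HH A v_i=\mu_i=0$, so $A v_i=0=\lambda_i u_i$. For $k<i\leq m$ (which occurs only if $m>n$) the same computation gives $A v_i=0$, while the corresponding column of $\Sigma$ is zero since $\Sigma$ is $n\times m$ with nonzero entries only in positions $(i,i)$ for $i\leq k$. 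This establishes the decomposition.

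The argument is entirely standard complex linear algebra and presents no serious obstacle; the only delicate point is ensuring that the singular values appear in decreasing order, which is handled by permuting the eigenbasis of $A^\HH A$. An alternative, more self-contained route that avoids the spectral theorem would iteratively choose unit $v_1\in\BC^m$ maximizing $\|A v\|=:\lambda_1$, set $u_1:=\lambda_1^{-1}A v_1$, and then restrict $A$ to the orthogonal complements to recurse; this is essentially the variational proof, but the spectral-theorem route above is shorter and parallels the real case (\ref{SVDR}) verbatim with $\T$ replaced by $\HH$ and $\OO$ replaced by $\U$.
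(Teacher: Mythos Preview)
Your proof is correct and is the standard spectral-theoretic argument for the complex SVD. However, the paper does not actually prove this theorem: it is stated in Section~\ref{sectSVDM} as a classical result and simply attributed to the references \cite{For,Van,quat}. So there is nothing to compare against; you have supplied a valid proof where the paper only gives a citation.
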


Diagonal elements of the matrix $\Sigma$ are called singular values, they are square roots of eigenvalues of the matrices $A A^\HH$ or $A^\HH A$. Columns of the matrices $U$ and $V$ are eigenvectors of the matrices $A A^\HH$ and $A^\HH A$ respectively.

\begin{thm} For an arbitrary $A\in\BH^{n\times m}$, there exist matrices $U\in \Sp(n)$ and $V\in\Sp(m)$ such that
\begin{eqnarray}
 A=U\Sigma V^*,\label{SVDH}
\end{eqnarray}
where
$$
\Sigma=\diag(\lambda_1, \lambda_2, \ldots, \lambda_k),\qquad k=\min(n, m),\qquad \BR\ni\lambda_1, \lambda_2, \ldots, \lambda_k\geq 0.
$$
\end{thm}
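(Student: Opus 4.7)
The plan is to follow the standard strategy used in the real and complex SVD proofs, adapted to quaternionic matrices with attention to the non-commutativity of $\BH$. The essential input is the spectral theorem for quaternionic Hermitian matrices: any $H\in\BH^{m\times m}$ with $H^*=H$ admits a decomposition $H=V D V^*$ with $V\in\Sp(m)$ and $D$ a real diagonal matrix. This is classical (see, e.g., \cite{quat}) and can be proved either via the complex adjoint representation $\BH^{m\times m}\hookrightarrow\BC^{2m\times 2m}$ together with the ordinary complex spectral theorem, or directly by induction using the existence of a right eigenvector with real eigenvalue for a quaternionic Hermitian matrix.

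First I would consider $H:=A^* A\in\BH^{m\times m}$. It is Hermitian, since $(A^* A)^*=A^* A$, and positive semi-definite, since $x^* H x=(Ax)^*(Ax)\geq 0$ for every $x\in\BH^m$. Applying the spectral theorem yields $V\in\Sp(m)$ such that $V^* H V=\diag(\lambda_1^2,\ldots,\lambda_r^2,0,\ldots,0)$ with $\lambda_1\geq\cdots\geq\lambda_r>0$ real, where $r\leq\min(n,m)$ is the rank of $A$. Define $\Sigma$ as the $n\times m$ matrix with these $\lambda_i$ on the diagonal and zeros elsewhere.

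Next I would build $U\in\Sp(n)$ column by column. For $i\leq r$, define $u_i:=\lambda_i^{-1}(AV)_i$, where $(AV)_i$ denotes the $i$-th column of $AV$. A direct computation gives $u_i^* u_j=\lambda_i^{-1}\lambda_j^{-1}(V^* A^* A V)_{ij}=\delta_{ij}$, so $u_1,\ldots,u_r$ form a right-orthonormal system in $\BH^n$. For the columns corresponding to zero singular values, the corresponding columns of $AV$ vanish (since $\|(AV)_i\|^2=(V^*A^*AV)_{ii}=0$), so I only need to extend $u_1,\ldots,u_r$ to a right-orthonormal basis $u_1,\ldots,u_n$ of $\BH^n$, which can be done by the quaternionic Gram--Schmidt procedure. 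Taking $U$ to be the matrix with columns $u_1,\ldots,u_n$ then places $U\in\Sp(n)$, and a direct check column by column shows $AV=U\Sigma$, i.e.\ $A=U\Sigma V^*$.

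The main obstacle is bookkeeping with the non-commutativity of $\BH$: the singular values $\lambda_i$ must be genuinely real (not just quaternionic right-eigenvalues), so that they commute with all quaternionic entries and can be safely divided out in the normalisation step. This reality is precisely what Hermiticity of $H$ buys us through the spectral theorem. A secondary technicality is that quaternionic Gram--Schmidt must be applied with right scalar multiplication throughout, so that the resulting basis satisfies $u_i^* u_j=\delta_{ij}$ in the correct quaternionic sense; once this convention is fixed, all remaining computations are formally identical to those in the real and complex cases.
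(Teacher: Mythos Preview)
Your argument is correct and is essentially the standard proof of quaternionic SVD. Note, however, that the paper does not actually supply a proof of this statement: it is presented in Section~\ref{sectSVDM} as a classical result and simply cited (see \cite{quat}), so there is no ``paper's own proof'' to compare against. Your write-up matches the approach one finds in that reference.
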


Diagonal elements of the matrix $\Sigma$ are called singular values.

\section{SVD in GA}\label{sectSVD}

In the following theorem, we present singular value decomposition of an arbitrary multivector in geometric algebra $\cl_{p,q}$. Note that the statement involves only operations in $\cl_{p,q}$.
 
\begin{thm}[SVD in GA]\label{th1} For an arbitrary multivector $M\in\cl_{p,q}$, there exist multivectors $U, V\in \G\cl_{p,q}$, where
$$
\G\cl_{p,q}=\{U\in \cl_{p,q}: U^\dagger U=e\},\quad U^\dagger:=\sum_A u_A (e_A)^{-1},
$$
such that
\begin{eqnarray}
M=U\Sigma V^\dagger,\label{SVDM}
\end{eqnarray}
where multivector $\Sigma$ belongs to the subspace $K$ of $\cl_{p,q}$, which is real span of a set of $d$ (\ref{dd}) fixed basis elements (always including the identity element $e$):
\begin{eqnarray}
\Sigma\in K:=\spn(\{e_{B_i}, i=1, \ldots, d\})=\{ \sum_{i=1}^d \lambda_i e_{B_i},\quad \lambda_i \in\BR\}.
\end{eqnarray}
\end{thm}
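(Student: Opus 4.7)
The plan is to reduce the statement to the classical matrix SVD theorems (for $\BR$, $\BC$, and $\BH$ matrices stated in Section \ref{sectSVDM}) by transporting everything through the faithful representation $\beta'$ of Section \ref{sectMR}. First, set $A := \beta'(M)$, which lives in the matrix algebra specified by (\ref{isom}). Apply the appropriate classical SVD (componentwise in the block cases $p-q \equiv 1, 5 \mod 8$) to obtain $A = U_0 \Sigma_0 V_0^\star$, where $\star$ is $\T$, $\HH$, or $*$ according to the signature class, $U_0, V_0$ belong to the corresponding orthogonal, unitary, or symplectic Lie group (possibly a product), and $\Sigma_0$ is real diagonal. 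Now pull back: set $U := \beta'^{-1}(U_0)$, $V := \beta'^{-1}(V_0)$, $\Sigma := \beta'^{-1}(\Sigma_0)$. By the compatibility (\ref{sogl}), $V_0^\star = \beta'(V^\dagger)$, so applying $\beta'^{-1}$ to the matrix factorization yields $M = U \Sigma V^\dagger$; by (\ref{isgr}), $U, V \in \G\cl_{p,q}$.

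It remains to show that $\Sigma$ lies in the real span of $d$ fixed basis elements of $\cl_{p,q}$, with $e$ among them. A direct parameter count in each of the five cases of (\ref{isom}) shows that the real diagonal matrices form a $d$-dimensional real subspace of the image of $\beta'$, so its preimage under $\beta'$ is a $d$-dimensional real subspace of $\cl_{p,q}$. It therefore suffices to exhibit $d$ basis elements $e_{B_i}$ of $\cl_{p,q}$ whose images under $\beta'$ are diagonal: by dimension count they will automatically span the preimage.

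The main obstacle is this last combinatorial step, which I handle by induction along the recursive construction of $\beta'$. The base cases $\cl_{0,1}, \cl_{1,0}, \cl_{0,2}, \cl_{0,3}$ are verified by direct inspection; for instance $\{e, e_1\}$ is diagonal in $\cl_{1,0}$, while in $\cl_{0,3}$ one computes $\beta'(e_{123}) = \diag(-1, 1)$, so $\{e, e_{123}\}$ works. For the inductive passage $\cl_{p,q} \to \cl_{p+1, q+1}$, suppose $\{e_{B_i}\}_{i=1}^d$ is the diagonal set in $\cl_{p,q}$. Using the recipe of Section \ref{sectMR}, each $\beta'(e_{B_i})$ is sent to a block-diagonal matrix with diagonal blocks, and a short calculation gives $\beta'(e_{p+1} e_{p+q+2}) = \diag(I, -I)$ in the first subcase (with an analogous computation involving $\Omega$ in the $p-q \equiv 1 \mod 4$ subcase). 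Hence $\{e_{B_i}\} \cup \{e_{B_i} e_{p+1} e_{p+q+2}\}$ provides $2d = d'$ diagonal basis elements in $\cl_{p+1, q+1}$. The recurrences $\cl_{p,q} \to \cl_{q+1, p-1}$ and $\cl_{p,q} \to \cl_{p-4, q+4}$ are handled by entirely analogous computations, and Cartan--Bott $8$-periodicity then propagates the construction to every signature.
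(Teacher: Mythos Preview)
Your approach is essentially identical to the paper's: transport $M$ through the explicit representation $\beta'$, apply the classical matrix SVD (blockwise when $p-q\equiv 1,5\bmod 8$), and pull back using the compatibility (\ref{sogl}) and the group isomorphisms (\ref{isgr}). The paper's proof is in fact more terse than yours---it simply asserts that the real diagonal $\Sigma_0$ pulls back to a real span of $d$ basis elements, without the inductive justification you supply. Your added induction along the recursive construction of $\beta'$ is a reasonable way to make that step explicit, though the ``analogous'' cases you defer (the $p-q\equiv 1\bmod 4$ subcase of $\cl_{p,q}\to\cl_{p+1,q+1}$, and the recursions to $\cl_{q+1,p-1}$ and $\cl_{p-4,q+4}$) do require some care---e.g.\ in the first of these one must check that $\beta'_1\cdots\beta'_n$ is itself real diagonal, which it is since $e_{1\ldots n}$ lies in the diagonal span inductively.
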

\begin{proof} Let us use the matrix representation $\beta'$ of $\cl_{p,q}$ from Section \ref{sectMR}. Then we use the isomorphisms (\ref{isgr}) and SVD of matrices (see Section \ref{sectSVDM}). In the cases $p-q=1, 5\mod 8$, the matrix representation is block-diagonal and we use SVD for each of two blocks. The singular values are always real and we get a real span of $d$ basis elements of $\cl_{p,q}$ with real diagonal matrix representation. 
\end{proof}

Thus, the meaning of SVD in real Clifford geometric algebra is the following: after multiplication on the left and on the right by elements of the group $\G\cl_{p,q}$ (\ref{GG}), any multivector $M\in\cl_{p,q}$ can be placed in a $d$-dimensional subspace $K$ of $\cl_{p,q}$, where $d$ is (\ref{dd}).

Note that the subspace $K$ from Theorem \ref{th1} is not unique. By changing the matrix representation, we can change the subspace $K$. But it always has dimension $d$ and contains the identity element $e$. For convenience, in the examples below we use the representation $\beta'$  from Section \ref{sectMR}.

Using (\ref{dd}) and (\ref{GGG}), we get (see the right-hand part of (\ref{SVDM}))
\begin{eqnarray}
\dim (K)+ 2\dim(\G\cl_{p,q})=
\begin{cases}
2^{n}, & \mbox{if $p-q=0, 1, 2 \mod 8$,}\\
2^{n}+2^{\frac{n-1}{2}}, & \mbox{if $p-q=3, 7 \mod 8$,}\\
2^{n}+3\cdot 2^{\frac{n-2}{2}}, & \mbox{if $p-q=4, 6 \mod 8$,}\\
2^{n}+ 3\cdot 2^{\frac{n-1}{2}}, & \mbox{if $p-q=5 \mod 8$,}
\end{cases}
\end{eqnarray}
which is greater than or equal to $\dim(\cl_{p,q})=2^n$, i.e. the number of independent coefficients of an arbitrary multivector $M\in\cl_{p,q}$. The equality holds in the cases $p-q=0, 1, 2\mod 8$ of real matrix representations.

\begin{ex} In the case $\cl_{2,0}\cong\Mat(2,\BR)$, we have
\begin{eqnarray}
&&\beta'(e)=\left(\begin{array}{cc}
      1 & 0 \\
      0 & 1
    \end{array}\right),\quad \beta'(e_1)=\left(\begin{array}{cc}
      0 & 1 \\
      1 & 0
    \end{array}\right),\label{G20}\\ &&\beta'(e_2)=\left(\begin{array}{cc}
      -1 & 0 \\
      0 & 1
    \end{array}\right),\quad  \beta'(e_{12})=\left(\begin{array}{cc}
      0 & 1 \\
      -1 & 0
    \end{array}\right).\nonumber
\end{eqnarray}
The matrices $\beta'(e)$ and $\beta'(e_2)$ are real and diagonal, we get the 2-dimensional subspace
\begin{eqnarray}
K=\spn(e, e_2).\label{KK}
\end{eqnarray}
Note that if we change the matrix representation so that the matrices for $e_1$ and $e_2$ are swapped, then the subspace $K$ will be a real span of $e$ and $e_1$. Thus, $K$ is not unique. For convenience, we choose (\ref{KK}) in this example.

Consider the multivector
$$M=5e+4e_1+3e_2\in\cl_{2,0}.$$
We can choose
$$U=V=\frac{1}{\sqrt{5}}(e-2e_{12})\in\G\cl_{2, 0}$$
with the properties
$$U^\dagger U=V^\dagger V=\frac{1}{\sqrt{5}}(e+2e_{12})\frac{1}{\sqrt{5}}(e-2e_{12})=e$$
such that
$$
M=U\Sigma V^\dagger
$$
with the element
$$
\Sigma=5(e-e_2)\in K=\spn(e, e_{2}).$$
The decomposition
$$
5e+4e_1+3e_2=\frac{1}{\sqrt{5}}(e-2e_{12})\,5(e-e_2)\,\frac{1}{\sqrt{5}}(e-2e_{12})$$
is equivalent to the matrix decomposition
$$
\left(\begin{array}{cc}
      2 & 4 \\
      4 & 8
    \end{array}\right)=\left(\begin{array}{cc}
      \frac{1}{\sqrt{5}} & -\frac{2}{\sqrt{5}} \\
      \frac{2}{\sqrt{5}} & \frac{1}{\sqrt{5}}
    \end{array}\right) \left(\begin{array}{cc}
      10 & 0 \\
      0 & 0
    \end{array}\right) \left(\begin{array}{cc}
  \frac{1}{\sqrt{5}} & \frac{2}{\sqrt{5}} \\
      -\frac{2}{\sqrt{5}} & \frac{1}{\sqrt{5}}
    \end{array}\right)
$$
using the matrix representation (\ref{G20}).

Following the comment of one of the anonymous reviewers, let us also present an explicit example for a defective multivector (the corresponding matrix representation is not diagonalizable). Consider the multivector
$$M=\frac{1}{2}(e_1+e_{12})\in\cl_{2,0}.$$
We can choose
$$U=e,\qquad V=e_1\in\G\cl_{2, 0}$$
with the properties
$$U^\dagger U=V^\dagger V=e$$
such that
$$
M=U\Sigma V^\dagger
$$
with the element
$$
\Sigma=\frac{1}{2}(e-e_2)\in K=\spn(e, e_{2}).$$
The decomposition
$$
\frac{1}{2}(e_1+e_{12})=e\,\frac{1}{2}(e-e_2)\,e_1$$
is equivalent to the matrix decomposition
$$
\left(\begin{array}{cc}
      0 & 1 \\
      0 & 0
    \end{array}\right)=\left(\begin{array}{cc}
      1 & 0 \\
      0 & 1
    \end{array}\right) \left(\begin{array}{cc}
      1 & 0 \\
      0 & 0
    \end{array}\right) \left(\begin{array}{cc}
  0 & 1 \\
     1 & 0
    \end{array}\right)
$$
using the matrix representation (\ref{G20}).

\end{ex}

\begin{ex} In the case $\cl_{1, 3}\cong \Mat(2,\BH)$, we have
\begin{eqnarray}
&&\beta'(e)=\left(\begin{array}{cc}
      1 & 0 \\
      0 & 1
    \end{array}\right), \beta'(e_1)=\left(\begin{array}{cc}
      0 & 1 \\
      1 & 0
    \end{array}\right), \beta'(e_2)=\left(\begin{array}{cc}
      i & 0 \\
      0 & -i
    \end{array}\right),\label{G13}\\ 
&&      \beta'(e_{3})=\left(\begin{array}{cc}
      j& 0 \\
      0 & -j
    \end{array}\right), \beta'(e_{4})=\left(\begin{array}{cc}
      0& -1 \\
      1 & 0
    \end{array}\right), \beta'(e_{12})=\left(\begin{array}{cc}
      0 & -i \\
      i & 0
    \end{array}\right),\nonumber\\
   &&  \beta'(e_{13})=\left(\begin{array}{cc}
      0 & -j \\
      j & 0
    \end{array}\right), \beta'(e_{14})=\left(\begin{array}{cc}
      1 & 0 \\
      0 & -1
    \end{array}\right), \beta'(e_{23})=\left(\begin{array}{cc}
      k & 0 \\
      0 & k
    \end{array}\right),\nonumber\\
   &&  \beta'(e_{24})=\left(\begin{array}{cc}
      0 & -i \\
      -i & 0
    \end{array}\right), \beta'(e_{34})=\left(\begin{array}{cc}
      0 & -j \\
      -j & 0
    \end{array}\right), \beta'(e_{123})=\left(\begin{array}{cc}
      0 & k \\
      k & 0
    \end{array}\right),\nonumber\\
    && \beta'(e_{124})=\left(\begin{array}{cc}
      -i & 0 \\
      0 & -i
    \end{array}\right),\quad \beta'(e_{134})=\left(\begin{array}{cc}
      -j & 0 \\
      0 & -j
    \end{array}\right),\nonumber\\
    &&\beta'(e_{234})=\left(\begin{array}{cc}
      0 & -k \\
      k & 0
    \end{array}\right),\quad \beta'(e_{1234})=\left(\begin{array}{cc}
      k & 0 \\
      0 & -k
    \end{array}\right).\nonumber
\end{eqnarray}
The matrices $\beta'(e)$, $\beta'(e_{14})$ are real and diagonal. We get the 2-dimensional subspace  
$$
K=\spn(e, e_{14}).
$$
Thus, an arbitrary multivector $M\in\cl_{1,3}$ with $16$ independent coefficients can be placed in the 2-dimensional subspace $\spn(e, e_{14})$ after multiplication on the left and on the right by two elements of the group $\G\cl_{1,3}$.
\end{ex}

\begin{ex} In the case $\cl_{2, 1}\cong\Mat(2,\BR)\oplus\Mat(2,\BR)$, the matrices $\beta'(e)$, $\beta'(e_{1})$, $\beta'(e_{23})$, and $\beta'(e_{123})$ are real and diagonal. We get the 4-dimensional subspace
$$
K=\spn(e, e_{1}, e_{23}, e_{123}).
$$
\end{ex}

\section{On the classical polar decomposition of real, complex, and quaternion matrices}\label{sectCPD}

Let us consider a classical polar decomposition (right and left) of arbitrary square real, complex, and quaternion matrices (for quaternion case, see \cite{quat}).

\begin{thm}
For an arbitrary $A\in\BR^{n\times n}$, there exist positive semi-definite symmetric matrices $P$ and $S\in\BR^{n\times n}$ (i.e. $P^\T=P$ and $z^\T P z \geq 0$, $\forall z\in\BR^n$; $S^\T=S$ and $z^\T S z \geq 0$, $\forall z\in\BR^n$) and matrix $W\in\OO(n)$ such that
\begin{eqnarray}
    A=WP=SW.
\end{eqnarray}
\end{thm}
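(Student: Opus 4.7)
The plan is to derive polar decomposition directly from singular value decomposition, which was recalled just above. Given $A\in\BR^{n\times n}$, SVD (applied in the square case $m=n$) furnishes $U,V\in\OO(n)$ and a diagonal $\Sigma=\diag(\lambda_1,\ldots,\lambda_n)$ with $\lambda_i\geq 0$ such that $A=U\Sigma V^\T$. The key algebraic trick is to insert a resolution of the identity $V^\T V=I$ (or $U^\T U=I$) between the factors to split off an orthogonal piece.

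First I would define $W:=UV^\T$ and $P:=V\Sigma V^\T$, and compute
\begin{equation*}
WP=UV^\T V\Sigma V^\T = U\Sigma V^\T = A.
\end{equation*}
Then $W\in\OO(n)$ because it is a product of orthogonal matrices, and $P^\T=(V\Sigma V^\T)^\T=V\Sigma^\T V^\T=V\Sigma V^\T=P$, so $P$ is symmetric. For positive semi-definiteness, for any $z\in\BR^n$ set $y=V^\T z$; then
\begin{equation*}
z^\T P z = y^\T \Sigma y = \sum_{i=1}^n \lambda_i y_i^2 \geq 0,
\end{equation*}
using that $\lambda_i\geq 0$. This gives $A=WP$ with the required properties.

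Next I would define $S:=U\Sigma U^\T$ and show $A=SW$ with the \emph{same} $W=UV^\T$:
\begin{equation*}
SW=U\Sigma U^\T U V^\T = U\Sigma V^\T = A.
\end{equation*}
Symmetry and positive semi-definiteness of $S$ follow by exactly the same argument as for $P$, with the roles of $U$ and $V$ interchanged.

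I do not anticipate a genuine obstacle here: the statement is a direct corollary of the SVD theorem recalled in Section~\ref{sectSVDM}, and the only thing to verify is that the two candidates $P=V\Sigma V^\T$ and $S=U\Sigma U^\T$ inherit symmetry and positive semi-definiteness from the non-negativity of the singular values, which is immediate. If one wishes, one can additionally remark that $P$ coincides with the unique positive semi-definite square root of $A^\T A$ (and $S$ with that of $A A^\T$), since $P^2=V\Sigma^2 V^\T=V\Sigma U^\T U\Sigma V^\T=A^\T A$, but this uniqueness goes beyond what the statement asks for.
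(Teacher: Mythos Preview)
Your proposal is correct and matches the paper's approach exactly: the paper does not give a formal proof of this classical theorem but, immediately after stating it, records precisely the construction you carry out, namely $W=UV^\T$, $P=V\Sigma V^\T$, $S=U\Sigma U^\T$ from the SVD $A=U\Sigma V^\T$, together with the remark that $P=\sqrt{A^\T A}$ and $S=\sqrt{AA^\T}$. Your argument simply fills in the routine verifications that the paper leaves implicit.
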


Given a real symmetric matrix $P$, the following statements are equivalent:
\begin{itemize}
\item $P$ is positive semi-definite,
\item all the eigenvalues of $P$ are non-negative,
\item there exists a matrix $B$ such that $P = B^\T B$.
\end{itemize}

If we have SVD of the real matrix $A=U \Sigma V^\T$, then we can take $W=U V^\T$, $P=V\Sigma V^\T$, and $S=U\Sigma U^\T$. Note that $P=\sqrt{A^\T A}$ and $S=W P W^\T=\sqrt{A A^\T}$.

\begin{thm}
For an arbitrary $A\in\BC^{n\times n}$, there exist positive semi-definite Hermitian matrices $P$ and $S\in\BC^{n\times n}$ (i.e. $P^\HH=P$ and $z^\HH P z \geq 0$, $\forall z\in\BC^n$; $S^\HH=S$ and $z^\HH S z \geq 0$, $\forall z\in\BC^n$) and matrix $W\in\U(n)$ such that
\begin{eqnarray}
    A=WP=SW.
\end{eqnarray}
\end{thm}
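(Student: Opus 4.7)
The plan is to derive this polar decomposition directly from the complex SVD (the second theorem in Section \ref{sectSVDM}), exactly paralleling the hint given after the real case. All the work is in a clean algebraic verification; there is no real obstacle.

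First, I would apply the complex SVD theorem to $A\in\BC^{n\times n}$ to obtain
$$
A=U\Sigma V^\HH,\qquad U,V\in\U(n),\qquad \Sigma=\diag(\lambda_1,\ldots,\lambda_n),\quad \lambda_i\in\BR_{\geq 0}.
$$
Then I would define the candidate factors
$$
W:=UV^\HH,\qquad P:=V\Sigma V^\HH,\qquad S:=U\Sigma U^\HH.
$$

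Next I would verify the three properties in turn. For unitarity of $W$: $W^\HH W=VU^\HH UV^\HH=VV^\HH=I$, so $W\in\U(n)$. For the right and left factorizations:
$$
WP=UV^\HH V\Sigma V^\HH=U\Sigma V^\HH=A,\qquad SW=U\Sigma U^\HH UV^\HH=U\Sigma V^\HH=A.
$$
For Hermiticity, since $\Sigma$ is real diagonal we have $\Sigma^\HH=\Sigma$, hence $P^\HH=(V\Sigma V^\HH)^\HH=V\Sigma V^\HH=P$ and similarly $S^\HH=S$. For positive semi-definiteness, for any $z\in\BC^n$ set $w=V^\HH z$; then
$$
z^\HH P z=w^\HH\Sigma w=\sum_{i=1}^n \lambda_i|w_i|^2\geq 0,
$$
since each $\lambda_i\geq 0$. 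The same computation with $w=U^\HH z$ shows $z^\HH S z\geq 0$. This completes the construction.

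The only step requiring any thought is the observation that because the singular values are real and non-negative, the diagonal matrix $\Sigma$ is simultaneously Hermitian and positive semi-definite; once this is noted, the passage from SVD to polar decomposition is a one-line algebraic manipulation. No obstacle arises, and in particular uniqueness is not claimed in the statement, so I would not need to address it.
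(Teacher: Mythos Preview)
Your proposal is correct and follows exactly the approach the paper indicates: the paper does not give a formal proof of this classical theorem but simply remarks that from the SVD $A=U\Sigma V^\HH$ one takes $W=UV^\HH$, $P=V\Sigma V^\HH$, $S=U\Sigma U^\HH$, which is precisely your construction. Your verification of unitarity, Hermiticity, and positive semi-definiteness fills in the routine details the paper omits.
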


Given a complex Hermitian matrix $P$, the following statements are equivalent:
\begin{itemize}
\item $P$ is positive semi-definite,
\item all the eigenvalues of $P$ are non-negative,
\item there exists a matrix $B$ such that $P = B^\HH B$.
\end{itemize}

If we have SVD of the complex matrix $A=U \Sigma V^\HH$, then we can take $W=U V^\HH$, $P=V\Sigma V^\HH$, and $S=U\Sigma U^\HH$. Note that $P=\sqrt{A^\HH A}$ and $S=W P W^\HH=\sqrt{A A^\HH}$.

\begin{thm}
For an arbitrary $A\in\BH^{n\times n}$, there exist quaternion positive semi-definite Hermitian matrices $P$ and $S\in\BH^{n\times n}$ (i.e. $P^*=P$ and $z^* P z \geq 0$, $\forall z\in\BH^n$; $S^*=S$ and $z^* S z \geq 0$, $\forall z\in\BH^n$) and matrix $W\in\Sp(n)$ such that
\begin{eqnarray}
    A=WP=SW.
\end{eqnarray}
\end{thm}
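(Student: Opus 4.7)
The plan is to reduce the quaternion polar decomposition to the quaternion SVD stated just above, by the same construction used in the real and complex cases. First I would invoke the SVD to write $A=U\Sigma V^*$ with $U,V\in\Sp(n)$ and $\Sigma=\diag(\lambda_1,\ldots,\lambda_n)$, $\lambda_i\in\BR$, $\lambda_i\geq 0$. I would then define
$$W:=UV^*,\qquad P:=V\Sigma V^*,\qquad S:=U\Sigma U^*,$$
and check each required property in turn.

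The verifications are short. Using $U^*U=V^*V=I$, one gets $W^*W=VU^*UV^*=VV^*=I$, so $W\in\Sp(n)$. The two factorizations $WP=UV^*V\Sigma V^*=U\Sigma V^*=A$ and $SW=U\Sigma U^*UV^*=U\Sigma V^*=A$ are immediate. Hermiticity $P^*=P$ and $S^*=S$ follows because $\Sigma$ has real diagonal entries, hence $\Sigma^*=\Sigma$, giving $(V\Sigma V^*)^*=V\Sigma V^*$ and similarly for $S$. For positive semi-definiteness of $P$, I would set $w:=V^*z\in\BH^n$ for an arbitrary $z\in\BH^n$; then $z^*Pz=w^*\Sigma w=\sum_{i=1}^n \overline{w_i}\,\lambda_i\,w_i=\sum_{i=1}^n \lambda_i |w_i|^2\geq 0$, and the same computation with the roles of $U$ and $V$ swapped handles $S$.

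The only mildly subtle point is the quaternionic bookkeeping: unlike in the real or complex setting, one cannot freely commute factors in $\BH^{n\times n}$. The argument nevertheless goes through because the singular values $\lambda_i$ are real and real numbers lie in the center of $\BH$, so $\overline{w_i}\lambda_i w_i$ legitimately equals $\lambda_i |w_i|^2$. The same centrality ensures that $z^*Pz$ is a priori a real scalar, since from $P^*=P$ one gets $(z^*Pz)^*=z^*Pz$, forcing this quaternion to equal its own conjugate. I therefore do not anticipate any genuine obstacle; the main care needed is to respect the order of the $*$-operation and to appeal to the reality of $\Sigma$ at exactly the right moments.
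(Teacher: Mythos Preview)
Your proposal is correct and matches the paper's own approach: the paper does not give a formal proof of this classical statement but immediately after it records exactly the construction you use, namely $W=UV^*$, $P=V\Sigma V^*$, $S=U\Sigma U^*$ from the quaternion SVD $A=U\Sigma V^*$. Your added verifications (including the centrality-of-$\BR$ remark for the positive semi-definiteness check) are sound and simply flesh out what the paper leaves implicit.
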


Given a quaternion Hermitian matrix $P$, the following statements are equivalent:
\begin{itemize}
\item $P$ is positive semi-definite,
\item all the eigenvalues of $P$ are non-negative,
\item there exists a matrix $B$ such that $P = B^* B$.
\end{itemize}

If we have SVD of the quaternion matrix $A=U \Sigma V^*$, then we can take $W=U V^*$, $P=V\Sigma V^*$, and $S=U\Sigma U^*$. Note that $P=\sqrt{A^* A}$ and $S=W P W^*=\sqrt{A A^*}$.

\section{Polar decomposition in GA}\label{sectPD}

In the following theorem, we present polar decomposition of an arbitrary multivector in geometric algebra $\cl_{p,q}$. Note that the statement involves only operations in $\cl_{p,q}$.

\begin{thm}[Left and right polar decomposition in GA]\label{th2} For an arbitrary multivector $M\in\cl_{p,q}$, there exist multivectors $P, S\in\cl_{p,q}$ such that
\begin{eqnarray}
&&\!\!\!\!\!\!\!\!\!\!P^\dagger=P,\qquad S^\dagger= S,\qquad U^\dagger:=\sum_A u_A (e_A)^{-1},\\
&&\!\!\!\!\!\!\!\!\!\!P=B^\dagger B,\qquad S=C^\dagger C\qquad \mbox{for some multivectors $B, C\in\cl_{p,q},$},
\end{eqnarray}
and multivector
$$W\in \G\cl_{p,q}=\{U\in \cl_{p,q}: U^\dagger U=e\}$$
such that
$$
M=WP=SW.
$$
\end{thm}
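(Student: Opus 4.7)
The plan is to deduce the polar decomposition as a direct consequence of the SVD in geometric algebra (Theorem \ref{th1}), mirroring the classical derivations recalled in Section \ref{sectCPD} (where $W = UV^\dagger$, $P = V\Sigma V^\dagger$, $S = U\Sigma U^\dagger$). Since every required object is built from the SVD data $U, V, \Sigma$, the entire argument is a short formal manipulation inside $\cl_{p,q}$, with one or two facts about $\Sigma$ needing separate verification.

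Concretely, I would first apply Theorem \ref{th1} to write $M = U\Sigma V^\dagger$ with $U, V \in \G\cl_{p,q}$ and $\Sigma \in K = \spn(e_{B_1}, \ldots, e_{B_d})$, and then set
$$W := UV^\dagger, \qquad P := V\Sigma V^\dagger, \qquad S := U\Sigma U^\dagger.$$
The identities $M = WP$ and $M = SW$ are immediate once one knows $U^\dagger U = UU^\dagger = e$ and $V^\dagger V = VV^\dagger = e$; the two-sided inverse assertions follow from applying $\beta'$ and using that $\beta'(V)$ and $\beta'(U)$ are invertible (orthogonal/unitary/symplectic) matrices. For the group condition, $W^\dagger W = VU^\dagger U V^\dagger = VV^\dagger = e$, so $W \in \G\cl_{p,q}$. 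For the Hermitian property $P^\dagger = P$, it suffices to show $\Sigma^\dagger = \Sigma$, after which $P^\dagger = V\Sigma^\dagger V^\dagger = P$ and likewise $S^\dagger = S$. To exhibit $B$ and $C$, I would define $\sqrt{\Sigma} \in K$ as the element of $K$ whose $\beta'$-image is the entrywise non-negative square root of the diagonal matrix $\beta'(\Sigma)$; then $B := \sqrt{\Sigma}\,V^\dagger$ and $C := \sqrt{\Sigma}\,U^\dagger$ satisfy $B^\dagger B = V(\sqrt{\Sigma})^\dagger \sqrt{\Sigma} V^\dagger = V\Sigma V^\dagger = P$ and analogously $C^\dagger C = S$.

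The only genuinely nontrivial points (rather than obstacles) are the assertions that $\Sigma^\dagger = \Sigma$ and that $\sqrt{\Sigma}$ is well-defined and also fixed by $\dagger$. Both follow from the special form of $K$ guaranteed by Theorem \ref{th1} and the choice of the representation $\beta'$ from Section \ref{sectMR}: each $\beta'(e_{B_i})$ is a real diagonal matrix, so by (\ref{sogl}) it is fixed by transpose, Hermitian transpose, or quaternionic conjugate transpose as appropriate, whence $(e_{B_i})^\dagger = e_{B_i}$ and by $\BR$-linearity $\Sigma^\dagger = \Sigma$. The non-negativity of the diagonal entries of $\beta'(\Sigma)$ (which the SVD of matrices delivers in decreasing order on the diagonal) lets me take the real entrywise square root, whose preimage under $\beta'$ lies again in $K$ and is therefore itself Hermitian. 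All the remaining computations use only (\ref{invol}), the definition (\ref{dagger}) of $\dagger$, and associativity in $\cl_{p,q}$.
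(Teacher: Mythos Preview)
Your proposal is correct and follows essentially the same route as the paper: the paper's proof transfers $M$ to a matrix via $\beta'$, invokes the classical polar decomposition from Section~\ref{sectCPD}, and pulls back via (\ref{sogl}), and immediately after the proof it records the same formulas $W=UV^\dagger$, $P=V\Sigma V^\dagger$, $S=U\Sigma U^\dagger$ that you derive from Theorem~\ref{th1}. Your write-up is more explicit about why $\Sigma^\dagger=\Sigma$ and how to build $B$ and $C$ from $\sqrt{\Sigma}\in K$, but the underlying argument is the same.
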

\begin{proof} The statement follows from the results of the previous sections of this paper. Namely, we use the matrix representation $\beta'$ of $\cl_{p,q}$ from Section \ref{sectMR}, the relation (\ref{sogl}) between matrix operations and Hermitian conjugation in geometric algebras, and the classical polar decomposition of matrices discussed in Section~\ref{sectCPD}.
\end{proof}
Note that
\begin{eqnarray}
P=\sqrt{M^\dagger M},\qquad S=W P W^\dagger=\sqrt{M M^\dagger}.
\end{eqnarray}
If we have the SVD of multivector $M=U\Sigma V^\dagger$ (\ref{SVDM}), then 
\begin{eqnarray}
W=U V^\dagger,\qquad P=V\Sigma V^\dagger,\qquad S=U\Sigma U^\dagger.
\end{eqnarray}

\section{The case of complexified Clifford geometric algebras}\label{sectCompl}

Let us consider the complexified Clifford geometric algebra (CGA) $\cl_{p,q}^\BC:=\BC\otimes\cl_{p,q}$ \cite{Lounesto,Bulg}. The complexified Clifford algebra is important for different applications, in particular the complexified geometric algebra $\cl^\BC_{1,3}$ is widely used in physics (see \cite{BT, unitary, Bulg}).  

An arbitrary element (multivector) $M\in\cl^\BC_{p,q}$ has the form
$$
M=\sum_A m_A e_A\in\cl^\BC_{p,q},\qquad m_A\in\BC,
$$
where $e_A$ are basis elements of the real Clifford geometric algebra $\cl_{p,q}$ (see Section \ref{sectGA}). Note that $\cl^\BC_{p,q}$ has the following basis of $2^{n+1}$ elements:
\begin{eqnarray}
e,\, ie,\, e_1,\, ie_1,\, e_2,\, i e_2,\, \ldots,\, e_{1\ldots n},\, i e_{1\ldots n}.\label{basisC}
\end{eqnarray}

In addition to the grade involution and reversion (\ref{gir}), we use the operation of complex conjugation, which takes complex conjugation only from the coordinates $m_A$ and does not change the basis elements $e_A$:
\begin{eqnarray}
\overline{M}=\sum_A \overline{m}_A e_A\in\cl^\BC_{p,q},\qquad m_A\in\BC,\qquad M\in\cl^\BC_{p,q}.
\end{eqnarray}
We have
\begin{eqnarray}
\overline{M_1 M_2}=\overline{M_1}\,\, \overline{M_2},\qquad \forall M_1, M_2\in\cl^\BC_{p,q}.
\end{eqnarray}

Let us consider an operation of Hermitian conjugation $\dagger$ in $\cl^\BC_{p,q}$  (see \cite{unitary,Bulg}):
\begin{eqnarray}
M^\dagger:=M|_{e_A \to (e_A)^{-1},\,\, m_A \to \overline{m}_A}=\sum_A \overline{m}_A (e_A)^{-1},\qquad M\in\cl^\BC_{p,q}.
\end{eqnarray}
We have the following two equivalent definitions of this operation:
\begin{eqnarray}
&&M^\dagger=\begin{cases}
e_{1\ldots p} \overline{\widetilde{M}}e_{1\ldots p}^{-1}, & \mbox{if $p$ is odd,}\\
e_{1\ldots p} \overline{\widetilde{\widehat{M}}}e_{1\ldots p}^{-1}, & \mbox{if $p$ is even,}\\
\end{cases}\\
&&M^\dagger=
\begin{cases}
e_{p+1\ldots n} \overline{\widetilde{M}}e_{p+1\ldots n}^{-1}, & \mbox{if $q$ is even,}\\
e_{p+1\ldots n} \overline{\widetilde{\widehat{M}}}e_{p+1\ldots n}^{-1}, & \mbox{if $q$ is odd.}\\
\end{cases}
\end{eqnarray}
The operation
$$(M_1, M_2):=\langle M_1^\dagger M_2 \rangle_0,\qquad M_1, M_2\in\cl^\BC_{p,q}$$
is a (positive definite) scalar product with the properties
\begin{eqnarray}
&&\!\!\!\!\!\!\!\!\!\!\!\!\!\!\!(M_1, M_2)=\overline{(M_2, M_1)},\\
&&\!\!\!\!\!\!\!\!\!\!\!\!\!\!\!(M_1+M_2, M_3)=(M_1, M_3)+(M_2, M_3),\quad (M_1, \lambda M_2)=\lambda (M_1, M_2),\\
&&\!\!\!\!\!\!\!\!\!\!\!\!\!\!\!(M, M)\geq 0,\quad \forall M\in\cl^\BC_{p,q};\qquad (M, M)=0 \Leftrightarrow M=0\label{||M||C}
\end{eqnarray}
for arbitrary multivectors $M_1, M_2, M_3\in\cl^\BC_{p,q}$ and $\lambda\in\BC$.

Using this scalar product we introduce inner product space over the field of complex numbers (unitary space) in $\cl^\BC_{p,q}$.

We have a norm 
\begin{eqnarray}
||M||:=\sqrt{(M,M)}=\sqrt{\langle M^\dagger M \rangle_0},\qquad M\in\cl^\BC_{p,q}.\label{normC}
\end{eqnarray}
with the properties
\begin{eqnarray}
&&||M||\geq 0,\quad \forall M\in\cl^\BC_{p,q};\qquad ||M||=0 \Leftrightarrow M=0,\\
&&||M_1+M_2|| \leq ||M_1||+||M_2||,\qquad \forall M_1, M_2\in\cl^\BC_{p,q},\\
&&||\lambda M||=|\lambda| ||M||,\qquad \forall M\in\cl^\BC_{p,q},\qquad \forall \lambda\in\BC.
\end{eqnarray}

Let us consider the following faithful representation (isomorphism) of the complexified geometric algebra
\begin{eqnarray}
\beta:\cl^\BC_{p,q}\quad \to\quad
\begin{cases}
    \Mat(2^{\frac{n}{2}}, \BC), &\mbox{if $n$ is even,}\\
    \Mat(2^{\frac{n-1}{2}}, \BC)\oplus\Mat(2^{\frac{n-1}{2}}, \BC), &\mbox{if $n$ is odd.}
\end{cases}\label{isomC}
\end{eqnarray}
Let us denote the size of the corresponding matrices by
\begin{eqnarray}
N:=2^{[\frac{n+1}{2}]},\label{NN}
\end{eqnarray}
where square brackets mean taking the integer part.

Let us present an explicit form of one of these representations of $\cl^\BC_{p,q}$ (we use it also for $\cl_{p,q}$ in \cite{det} and for $\cl^\BC_{p,q}$ in \cite{LMA}). We denote this fixed representation by $\beta'$. Let us consider the case $p = n$, $q = 0$. To obtain the matrix representation for another signature with $q\neq 0$, we should multiply matrices $\beta'(e_a)$, $a = p + 1, \ldots, n$ by imaginary unit $i$. For the identity element, we always use the identity matrix $\beta'(e)=I_N$ of the corresponding dimension $N$. We always take $\beta'(e_{a_1 a_2 \ldots a_k}) = \beta' (e_{a_1}) \beta' (e_{a_2}) \cdots \beta'(e_{a_k})$. In the case $n=1$, we take $\beta'(e_1)=\diag(1, -1)$. Suppose we know $\beta'_a:=\beta'(e_a)$, $a = 1, \ldots, n$ for some fixed odd $n = 2k + 1$. Then for $n = 2k + 2$, we take
the same $\beta'(e_a)$, $a = 1, \ldots , 2k + 1$, and 
$$\beta'(e_{2k+2})=\left(
    \begin{array}{cc}
      0 & I_{\frac{N}{2}}  \\
      I_{\frac{N}{2}} & 0 
    \end{array}
  \right).$$
For $n = 2k + 3$, we take
$$\beta'(e_{a})= \left(\begin{array}{cc}
      \beta'_a & 0 \\
      0 & -\beta'_a
    \end{array}
  \right),\qquad a=1, \ldots, 2k+2,$$ 
  and 
  $$\beta'(e_{2k+3})=\left(\begin{array}{cc}
      i^{k+1}\beta'_1\cdots \beta'_{2k+2} & 0 \\
      0 & -i^{k+1}\beta'_1\cdots \beta'_{2k+2} 
    \end{array}
  \right).$$
  This recursive method gives us an explicit form of the matrix representation $\beta'$ for all $n$.

Note that for this matrix representation we have
$$
(\beta'(e_a))^\HH=\eta_{aa} \beta'(e_a),\qquad a=1, \ldots, n,
$$
where $\HH$ is the Hermitian transpose of a matrix. Using the linearity, we get that Hermitian conjugation of the matrix is consistent with Hermitian conjugation of the corresponding multivector:
\begin{eqnarray}
\beta'(M^\dagger)=(\beta'(M))^\HH,\qquad M\in\cl^\BC_{p,q}.\label{soglC}
\end{eqnarray}
Note that the same is not true for an arbitrary matrix representations $\beta$ of the form (\ref{isomC}). It is true the matrix representations $\gamma=T^{-1}\beta' T$ obtained from $\beta'$ using the matrix $T$ such that $T^\HH T= I$.

Let us consider the group
\begin{eqnarray}
\G\cl^\BC_{p,q}=\{M\in \cl^\BC_{p,q}: M^\dagger M=e\},\label{GGC}
\end{eqnarray}
which we call a unitary group in $\cl^\BC_{p,q}$. Note that all the basis elements $e_A$ and $ie_A$ of $\cl^\BC_{p,q}$ belong to this group by the definition.

Using (\ref{isomC}) and (\ref{soglC}), we get the following isomorphisms to the classical matrix unitary groups:
\begin{eqnarray}
\G\cl^\BC_{p,q}\simeq\begin{cases}
    \U(2^{\frac{n}{2}}), &\mbox{if $n$ is even,}\\
    \U(2^{\frac{n-1}{2}})\times\U(2^{\frac{n-1}{2}}), &\mbox{if $n$ is odd.}
\end{cases}\label{isgrC}
\end{eqnarray}

The Lie algebra of the Lie group $\G\cl^\BC_{p,q}$ is 
$$\mathfrak{g}\cl^\BC_{p,q}=\{M\in\cl^\BC_{p,q}: M^\dagger=-M\}.$$
The basis of the Lie algebra $\mathfrak{g}\cl^\BC_{p,q}$ consists of anti-Hermitian basis elements
$$ie_1,\, ie_2,\, \ldots,\, ie_p,\, e_{p+1},\, \ldots\, e_{n},\, e_{12},\, \ldots,\, e_{p-1 p},\, ie_{p+1 p+2},\, \ldots $$
The number of such elements is equal to $2^n$; we get the dimension of the Lie group $\G\cl^\BC_{p,q}$ and the Lie algebra $\mathfrak{g}\cl^\BC_{p,q}$ (also we can calculate the dimension of the matrix Lie groups (\ref{isgrC})):
\begin{eqnarray}
\dim(\G\cl^\BC_{p,q})=\dim(\mathfrak{g}\cl^\BC_{p,q})=2^n.\label{GGGC}\end{eqnarray}

\begin{thm}[SVD in CGA]\label{th1C} For an arbitrary multivector $M\in\cl^\BC_{p,q}$, there exist multivectors $U, V\in \G\cl^\BC_{p,q}$, where
$$
\G\cl^\BC_{p,q}=\{U\in \cl^\BC_{p,q}: U^\dagger U=e\},\qquad U^\dagger:=\sum_A \overline{u}_A (e_A)^{-1},
$$
such that
\begin{eqnarray}
M=U\Sigma V^\dagger,\label{SVDMC}
\end{eqnarray}
where multivector $\Sigma$ belongs to the subspace $K\in\cl^\BC_{p,q}$, which is a real span of a set of $N=2^{[\frac{n+1}{2}]}$ fixed basis elements  (\ref{basisC}) of $\cl^\BC_{p,q}$ including the identity element $e$.
\end{thm}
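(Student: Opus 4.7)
The plan is to mirror the strategy used for Theorem \ref{th1}, transporting the statement through the faithful matrix representation $\beta'$ of Section \ref{sectCompl} and then invoking the classical SVD over $\BC$. Since the relation (\ref{soglC}) guarantees that Hermitian conjugation in $\cl^\BC_{p,q}$ corresponds exactly to Hermitian transpose of the matrix, and the isomorphism (\ref{isgrC}) identifies $\G\cl^\BC_{p,q}$ with $\U(N)$ (or with $\U(N/2)\times\U(N/2)$ when $n$ is odd), pulling back the matrix SVD will automatically produce factors $U, V$ belonging to $\G\cl^\BC_{p,q}$.

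First I would apply $\beta'$ to $M$ to obtain a complex matrix (or a pair of complex matrices, in the odd $n$ case). In the even case I would use Theorem on the complex SVD (\ref{SVDC}) to write $\beta'(M) = U_0 \Sigma_0 V_0^\HH$ with $U_0, V_0 \in \U(N)$ and $\Sigma_0$ a real diagonal $N\times N$ matrix with nonnegative entries. By (\ref{isomC}) and (\ref{soglC}), the matrices $U_0$ and $V_0$ are images under $\beta'$ of unique multivectors $U, V \in \G\cl^\BC_{p,q}$, and then $M = U \Sigma V^\dagger$ where $\Sigma := \beta'^{-1}(\Sigma_0)$. In the odd $n$ case I would apply the same argument to each of the two blocks and glue the resulting unitaries and diagonal parts into elements of $\cl^\BC_{p,q}$ using the block-diagonal structure (\ref{isomC}); the two block-SVDs produce real diagonal matrices, which again assemble into a single multivector $\Sigma$.

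The key remaining point is to identify the subspace $K$ in which $\Sigma$ lives. Because $\beta'(\Sigma)$ is a real diagonal matrix, $\Sigma$ is a real linear combination of those basis elements $e_A$ whose matrices $\beta'(e_A)$ are real and diagonal. I would count these elements: in the even case $n=2k$, diagonal real matrices in $\Mat(N,\BC)$ form an $N$-dimensional real subspace, and the recursive construction of $\beta'$ produces exactly $N = 2^{[(n+1)/2]}$ basis elements $e_A$ whose images form a basis of that subspace (including $e$, since $\beta'(e)=I_N$). In the odd case the real diagonal matrices inside the block-diagonal algebra $\Mat(N/2,\BC)\oplus\Mat(N/2,\BC)$ form a real subspace of dimension $N/2 + N/2 = N$, and again $N$ basis elements of $\cl^\BC_{p,q}$, including $e$, hit it. Defining $K$ as the real span of these $N$ basis elements yields the desired subspace.

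The main obstacle is the bookkeeping in the last step: showing that exactly $N$ basis elements $e_A$ from the list (\ref{basisC}) have real diagonal images under $\beta'$, and that these images are linearly independent so that $K$ has real dimension $N$ and $\Sigma$ really lives in $K$. This can be done by induction on $n$ following the recursion that defines $\beta'$: at each step one checks which of the new basis elements preserve the property ``real diagonal'' under the block constructions used to pass from $n$ to $n+1$, and one verifies that the identity matrix always lies in the image and corresponds to $e$. Once this counting matches (\ref{NN}), the statement follows.
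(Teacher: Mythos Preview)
Your proposal is correct and follows essentially the same route as the paper: transport $M$ via the faithful representation $\beta'$, apply the complex matrix SVD (blockwise when $n$ is odd), and pull back through (\ref{soglC}) and (\ref{isgrC}) to obtain $U,V\in\G\cl^\BC_{p,q}$ and a $\Sigma$ with real diagonal image. Your treatment is in fact more explicit than the paper's, which simply asserts the existence of the $N$ basis elements spanning $K$ without the inductive counting argument you outline.
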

\begin{proof} Let us use the matrix representation $\beta'$ of $\cl^\BC_{p,q}$ discussed above. We have the isomorphisms (\ref{isgr}) and use SVD of matrices. In the case of odd $n$, the matrix representation is block-diagonal and we use SVD for each of two blocks. The singular values are always real and we get a real span of $N$ basis elements of $\BC\otimes\cl_{p,q}$ with real diagonal matrix representation. 
\end{proof}

Thus the meaning of SVD in complexified Clifford geometric algebra is the following: after multiplication on the left and on the right by elements of the group $\G\cl^\BC_{p,q}$ (\ref{GGC}), any multivector $M\in\cl^\BC_{p,q}$ can be placed in a $N$-dimensional subspace $K\in\BC\otimes\cl_{p,q}$.

Note that the subspace $K$ from Theorem \ref{th1C} is not unique. By changing the matrix representation, we can change the subspace $K$. But it always has dimension $N$ and contains the identity element $e$. For convenience, in the examples below we use the representation $\beta'$  from this section.

Using (\ref{NN}) and (\ref{GGGC}), we get (see the right-hand part of (\ref{SVDMC}))
\begin{eqnarray}
2\dim(\G\cl^\BC_{p,q})+\dim (K)=2^{n+1}+2^{[\frac{n+1}{2}]},
\end{eqnarray}
which is always greater than $\dim(\cl^\BC_{p,q})=2^{n+1}$, i.e. the number of independent real coefficients of an arbitrary multivector $M\in\cl^\BC_{p,q}$.

\begin{ex} In the case $\cl^\BC_{2,0}\cong\Mat(2,\BC)$, we have
\begin{eqnarray}
&&\beta'(e)=\left(\begin{array}{cc}
      1 & 0 \\
      0 & 1
    \end{array}\right),\quad \beta'(e_1)=\left(\begin{array}{cc}
      1 & 0 \\
      0 & -1
    \end{array}\right),\label{G20C}\\ &&\beta'(e_2)=\left(\begin{array}{cc}
      0 & 1 \\
      1 & 0
    \end{array}\right),\quad  \beta'(e_{12})=\left(\begin{array}{cc}
      0 & 1 \\
      -1 & 0
    \end{array}\right).\nonumber
\end{eqnarray}
The matrices $\beta'(e)$ and $\beta'(e_1)$ are real and diagonal, we get the 2-dimensional subspace
\begin{eqnarray}
K=\spn(e, e_1).\label{KKK}
\end{eqnarray}
Note that if we change the matrix representation so that the matrices for $e_1$ and $e_2$ are swapped, then the subspace $K$ will be a real span of $e$ and $e_2$. Thus, $K$ is not unique. For convenience, we choose (\ref{KKK}) in this example.

Consider the multivector
\begin{eqnarray}
M=(1+i)e+(1-i)e_1+(1+i)e_2+(-1+i)e_{12}\in\cl^\BC_{2,0}.\label{M44}
\end{eqnarray}
We can choose
\begin{eqnarray}
U=\frac{1+i}{2\sqrt{2}} e+ \frac{-1+i}{2\sqrt{2}}e_1+\frac{-1+i}{2\sqrt{2}}e_2+\frac{-1-i}{2\sqrt{2}}e_{12}\in\G\cl^\BC_{2, 0},\label{U44}\\
V=\frac{1+i}{2\sqrt{2}} e+ \frac{-1+i}{2\sqrt{2}}e_1+\frac{1-i}{2\sqrt{2}}e_2+\frac{-1-i}{2\sqrt{2}}e_{12}\in\G\cl^\BC_{2, 0},\label{V44}
\end{eqnarray}
with the properties
\begin{eqnarray}
U^\dagger U=V^\dagger V=e
\end{eqnarray}
such that
\begin{eqnarray}
M=U\Sigma V^\dagger\label{PPP}
\end{eqnarray}
with the element
\begin{eqnarray}
\Sigma=2(e+e_1)\in K=\spn(e, e_{1}).\label{S44}
\end{eqnarray}
The decomposition (\ref{PPP}) is equivalent to the matrix decomposition
\begin{eqnarray}
\left(\begin{array}{cc}
      2 & 2i \\
      2 & 2i
    \end{array}\right)=\left(\begin{array}{cc}
      \frac{i}{\sqrt{2}} & \frac{-1}{\sqrt{2}} \\
      \frac{i}{\sqrt{2}} & \frac{1}{\sqrt{2}}
    \end{array}\right) \left(\begin{array}{cc}
      4 & 0 \\
      0 & 0
    \end{array}\right) \left(\begin{array}{cc}
      \frac{i}{\sqrt{2}} & \frac{-i}{\sqrt{2}} \\
      \frac{1}{\sqrt{2}} & \frac{1}{\sqrt{2}}
    \end{array}\right)^\HH
\end{eqnarray}
using the matrix representation (\ref{G20C}).
\end{ex}

\begin{ex} In the case $\cl^\BC_{3,0}\cong\Mat(2,\BC)\oplus\Mat(2,\BC)$, the matrices $\beta'(e)$, $\beta'(e_1)$,  $\beta'(ie_{23})$ and $\beta'(ie_{123})$ are diagonal and real. We get  
$$K=\spn(e, e_1, ie_{23}, ie_{123}).
$$
Thus, an arbitrary multivector $M\in\cl^\BC_{3,0}$ with $16$ independent real coefficients can be placed in the 4-dimensional real $\spn(e, e_1, ie_{23}, ie_{123})$ after multiplication on the left and on the right by two elements of the group $\G\cl^\BC_{3,0}$.
\end{ex}

\begin{thm}[Polar decomposition in CGA]\label{th2C} For an arbitrary multivector $M\in\cl^\BC_{p,q}$, there exist multivectors $P, S\in\cl^\BC_{p,q}$ such that
\begin{eqnarray}
&&\!\!\!\!\!\!\!\!\!\!P^\dagger=P,\qquad S^\dagger= S,\qquad U^\dagger:=\sum_A \overline{u}_A (e_A)^{-1},\\
&&\!\!\!\!\!\!\!\!\!\!P=B^\dagger B,\qquad S=C^\dagger C\qquad \mbox{for some multivectors $B, C\in\cl^\BC_{p,q}$},
\end{eqnarray}
and multivector
$$W\in \G\cl^\BC_{p,q}=\{U\in \cl^\BC_{p,q}: U^\dagger U=e\}$$
such that
$$
M=WP=SW.
$$
\end{thm}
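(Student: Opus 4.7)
The plan is to mirror the proof of Theorem~\ref{th2}, reducing the statement to the classical polar decomposition of complex matrices via the faithful representation $\beta'$ of $\cl^\BC_{p,q}$ from this section. I would first pass $M$ to $\beta'(M)$, which lies in $\Mat(N,\BC)$ or in a block-diagonal sum of two such algebras by (\ref{isomC}), and apply the classical polar decomposition of Section~\ref{sectCPD} on each block to obtain $\beta'(M) = \widetilde W \widetilde P = \widetilde S \widetilde W$ with $\widetilde W$ unitary and $\widetilde P, \widetilde S$ positive semi-definite Hermitian.

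Next I would pull the three factors back via $\beta'^{-1}$, setting $W := \beta'^{-1}(\widetilde W)$, $P := \beta'^{-1}(\widetilde P)$, $S := \beta'^{-1}(\widetilde S)$. The intertwining (\ref{soglC}), namely $\beta'(M^\dagger) = \beta'(M)^\HH$, transports $\widetilde P^\HH = \widetilde P$ and $\widetilde S^\HH = \widetilde S$ to $P^\dagger = P$ and $S^\dagger = S$, and carries $\widetilde W^\HH \widetilde W = I$ to $W^\dagger W = e$, so that $W \in \G\cl^\BC_{p,q}$ by (\ref{GGC}). Because $\beta'$ is an algebra isomorphism, $M = WP = SW$ follows immediately from the matrix-level identities. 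For the positive-semidefiniteness reformulation, I would invoke the matrix equivalence recalled in Section~\ref{sectCPD}: there exist $\widetilde B, \widetilde C \in \Mat(N,\BC)$ with $\widetilde P = \widetilde B^\HH \widetilde B$ and $\widetilde S = \widetilde C^\HH \widetilde C$; taking $B := \beta'^{-1}(\widetilde B)$ and $C := \beta'^{-1}(\widetilde C)$ and applying (\ref{soglC}) once more yields $P = B^\dagger B$ and $S = C^\dagger C$.

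As an equivalent, more intrinsic alternative, one can start from the SVD $M = U \Sigma V^\dagger$ of Theorem~\ref{th1C} and set $W := U V^\dagger$, $P := V \Sigma V^\dagger$, $S := U \Sigma U^\dagger$, exactly as in the proof of Theorem~\ref{th2}; then $WP = SW = U \Sigma V^\dagger = M$, and $P^\dagger = P$, $S^\dagger = S$ follow from $\Sigma^\dagger = \Sigma$ (the basis elements spanning $K$ are Hermitian because their $\beta'$-images are real diagonal), while $W^\dagger W = e$ comes from $V^\dagger V = e$ forcing also $V V^\dagger = e$ in the finite-dimensional algebra $\cl^\BC_{p,q}$. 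The factorization $P = B^\dagger B$ is then witnessed by $B := \sqrt{\Sigma}\, V^\dagger$, where $\sqrt{\Sigma} \in K$ is obtained by taking nonnegative real entrywise square roots of the (nonnegative real) diagonal matrix $\beta'(\Sigma)$ and pulling back, and similarly for $C := \sqrt{\Sigma}\, U^\dagger$.

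No serious obstacle is expected: the argument is essentially a mechanical transcription of the real case, with (\ref{soglC}) and (\ref{isgrC}) replacing (\ref{sogl}) and (\ref{isgr}) respectively, and the odd-$n$ block-diagonal case is handled by applying the classical result to each block independently. The one point meriting care is the treatment of Hermitian conjugation in $\cl^\BC_{p,q}$, which now involves complex conjugation of coefficients, so one must use the complex variant (\ref{soglC}) rather than the real/quaternionic trichotomy (\ref{sogl}); once that replacement is made, all remaining verifications are formal.
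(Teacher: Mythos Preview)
Your proposal is correct and takes essentially the same approach as the paper: reduce to the classical polar decomposition of complex matrices via the faithful representation $\beta'$ and the compatibility (\ref{soglC}), then pull back. The paper's own proof is in fact a single sentence (``We use polar decomposition for matrices and results of the first part of this section about relation between complex matrices and multivectors in $\cl^\BC_{p,q}$''), so your write-up simply spells out the details, and your SVD-based alternative matches the remark the paper records immediately after the proof.
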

\begin{proof} We use polar decomposition for matrices and results of the first part of this section about relation between complex matrices and multivectors in $\cl^\BC_{p,q}$.
\end{proof}
Note that
\begin{eqnarray}
P=\sqrt{M^\dagger M},\qquad S=W P W^\dagger=\sqrt{M M^\dagger}.
\end{eqnarray}
If we have the SVD of multivector $M=U\Sigma V^\dagger$ (\ref{SVDMC}), then 
\begin{eqnarray}
W=U V^\dagger,\qquad P=V\Sigma V^\dagger,\qquad S=U\Sigma U^\dagger.
\end{eqnarray}

\begin{ex} Let us continue the example discussed above with the multivector (\ref{M44}):
$$M=(1+i)e+(1-i)e_1+(1+i)e_2+(-1+i)e_{12}\in\cl^\BC_{2,0}.$$
Using (\ref{U44}), (\ref{V44}), and (\ref{S44}), we get the elements
\begin{eqnarray}
&&W=UV^\dagger=\frac{1}{2}(e-i e_1+i e_2-e_{12}),\\
&&P=V\Sigma V^\dagger=2(e+ie_{12}),\\
&&S=U\Sigma U^\dagger=2(e+e_2)
\end{eqnarray}
with the properties
\begin{eqnarray}
W^\dagger W=e,\qquad P^\dagger=P,\qquad S^\dagger=S.
\end{eqnarray}
We get the equalities
\begin{eqnarray}
M=WP=SW.\label{Polar44}
\end{eqnarray}
The decompositions (\ref{Polar44}) are equivalent to the matrix right and left polar decompositions
$$
\left(\begin{array}{cc}
      2 & 2i \\
      2 & 2i
    \end{array}\right)=\left(\begin{array}{cc}
      \frac{1-i}{2} & \frac{-1+i}{2} \\
      \frac{1+i}{2} & \frac{1+i}{2}
    \end{array}\right) \left(\begin{array}{cc}
      2 & 2i \\
      -2i & 2
    \end{array}\right)=
    \left(\begin{array}{cc}
      2 & 2 \\
      2 & 2
    \end{array}\right)\left(\begin{array}{cc}
      \frac{1-i}{2} & \frac{-1+i}{2} \\
      \frac{1+i}{2} & \frac{1+i}{2}
    \end{array}\right)
$$
using the matrix representation (\ref{G20C}).
\end{ex}

\section{Conclusions}\label{sectConcl}

In this paper, we naturally implement SVD and polar decomposition in real and complexified Clifford geometric algebras without using the corresponding matrix representations. Note that we use matrix representations in the proofs, namely, we use the classical SVD and polar decomposition of real, complex, and quaternion matrices. Theorems \ref{th1}, \ref{th2}, \ref{th1C}, and \ref{th2C} involve only operations in geometric algebras. The theorem on SVD in geometric algebras states that after left and right multiplication by elements of the group $\G\cl_{p,q}$ in the real case (and the group $\G\cl^\BC_{p,q}$ in the complex case), any multivector $M$ can be placed in $d$-dimensional subspace in the real case (and $N$-dimensional subspace in the complex case), where $d$ is equal to (\ref{dd}) (and $N$ is equal to (\ref{NN})). The polar decomposition is a consequence of the SVD. We expect the use of these theorems in different applications of real and complexified geometric algebras in computer science, engineering, physics, big data, machine learning, etc. This paper continues our previous research \cite{Abd,Abd2,Sylv,Sylv2,Vieta1,Vieta2} on the extension of matrix methods to geometric algebras, presented at previous ENGAGE (Empowering Novel Geometric Algebra for Graphics \& Engineering) workshops within the CGI 2020--2022 conferences.

Note that despite the statements of Theorems \ref{th1}, \ref{th2}, \ref{th1C}, and \ref{th2C} involve only operations in a geometric algebra, their proofs use matrix representation; it could be interesting to investigate, in a future work, alternative and more direct proofs involving only operations in the corresponding geometric algebra. Also note that we do not present a method (algorithm) to find the SVD in this paper. We present existing theorems. How to find elements $\Sigma$, $U$, and $V$ in (\ref{SVDM}) and (\ref{SVDMC}) using only the methods of geometric algebra and without using the corresponding matrix representations is a good and important task for further research. The problems of numerical accuracy and computation speed can also be studied. 

\section*{Acknowledgements}

The results of this paper were reported at the ENGAGE Workshop (Shanghai, China, August 2023) within the International Conference Computer Graphics International 2023 (CGI 2023). The author is grateful to the organizers and the participants of this conference for fruitful discussions.

The author is grateful to the anonymous reviewers for their careful reading of the paper and helpful comments on how to improve the presentation.

This work is supported by the Russian Science Foundation (project 23-71-10028), https://rscf.ru/en/project/23-71-10028/.

\medskip

\noindent{\bf Data availability} Data sharing not applicable to this article as no datasets were generated or analyzed during the current study.

\medskip

\noindent{\bf Declarations}\\
\noindent{\bf Conflict of interest} The authors declare that they have no conflict of interest.

\bibliographystyle{spmpsci}

% % ------------------------------------------------------------------------
\end{document}